\newtheorem{theorem}{Theorem}[section]
\newtheorem{definition}{Definition}[section]
\begin{document}

\begin{frontmatter}

\title{A Rollback in the History of Communication-Induced Checkpointing}


\author[ic]{Islene C.~Garcia\corref{cor1}}
\ead{islene@ic.unicamp.br}
\author[dcomp]{Gustavo M.~D.~Vieira}
\ead{gdvieira@ufscar.br}
\author[ic]{Luiz E.~Buzato}
\ead{buzato@ic.unicamp.br}
\cortext[cor1]{Corresponding author}

\address[ic]{Institute of  Computing, University of  Campinas, Brazil}
\address[dcomp]{Department  of Computing  at  Sorocaba, CCGT,  Federal
  University of S\~ao Carlos}

\makeatletter
\def\ps@pprintTitle{%
 \let\@oddhead\@empty
 \let\@evenhead\@empty
 \def\@oddfoot{\centerline{\thepage}}%
 \let\@evenfoot\@oddfoot}
\makeatother

\begin{abstract}
  The  literature  on  communication-induced  checkpointing  presents  a
family of protocols that use  logical clocks to control whether forced
checkpoints must be  taken. Efficiency of these  protocols is measured
by how many forced checkpoints are needed to ensure no checkpoint will
be  useless  to the  application;  the  fewer forced  checkpoints  the
better. For many years, HMNR, also called Fully Informed (FI), was the
most  complex and  efficient  protocol of  this  family.  The  Lazy-FI
protocol applies a  lazy strategy that defers the  increase of logical
clocks, resulting in a protocol with better efficiency for distributed
systems where processes can take basic checkpoints at different rates.
Recently,  the  Fully  Informed  aNd  Efficient  (FINE)  protocol  was
proposed using the same control structures  as FI, but with a stronger
and, presumably  better, checkpoint-inducing condition.  FINE  and its
lazy  version,  called Lazy-FINE,  would  now  be the  most  efficient
checkpointing protocols  based on logical clocks.   This paper reviews
this family of protocols, proves a theorem on a condition that must be
enforced by all stronger versions of FI, and proves that both FINE and
Lazy-FINE do not  guarantee the absence of useless  checkpoints.  As a
consequence, FI and Lazy-FI can be rolled back to the position of most
efficient  protocols  of  this  family  of  index-based  checkpointing
protocols.

\end{abstract}

\begin{keyword} 
  Reliability \sep Checkpointing/restart \sep Fault-tolerance
\end{keyword}

\end{frontmatter}

\section{Introduction}
\label{sec:introduction}

Checkpointing  is   a  widely  used   technique  that
provides fault-tolerance to distributed systems. A local checkpoint is
a state of a process that can be recovered after a crash. A consistent
global checkpoint~\cite{Chandy1985} is a set of local checkpoints that
can be used  to recover a system after a  global failure. If processes
take checkpoints at their own pace, a consistent global checkpoint may
not be  formed, and, in  the worst case,  the application may  need to
rollback to its initial state after a failure.  This is the well-known
\emph{domino  effect}~\cite{Randell1975}  caused  by the  presence  of
\emph{useless} checkpoints~\cite{Netzer1995}.

Some  checkpointing protocols  avoid  useless checkpoints  by using  a
coordinator and control messages~\cite{Chandy1985,Koo1987}. Others use
a  communication-induced  approach:  processes can  take  \emph{basic}
checkpoints autonomously,  but the protocol uses  information obtained
from  the  exchange   of  messages  among  the   processes  to  induce
\emph{forced} ones and,  thus, to eliminate the  occurrence of useless
checkpoints~\cite{Elnozahy2002,Manivannan1999}.    Checkpoint-inducing
conditions based on information stored in local variables and messages
received are used to control whether a forced checkpoint must be taken
before  delivering  the  payload  of a  message  to  the  application.
Therefore, communication-induced protocols are often compared in terms
of the  number of forced checkpoints  and the size of  the state (data
structures) maintained by each process to support the decision to take
a forced checkpoint. The fewer  the number of forced checkpoints taken
and the smaller the size of the data structures used the better.

Communication-induced index-based checkpointing  protocols implement a
variant  of   Lamport's  logical  clock~\cite{Lamport1978}   to  state
checkpoint-inducing  conditions.  Protocols  that  use this  approach,
such as~\cite{Briatico1984}, \cite{Manivannan1996}, \cite{Helary2000},
and  \cite{Tsai2005},  enforce  an  easily  observable  property  that
guarantees that checkpoints  stamped with the same clock  value form a
consistent    global    checkpoint~\cite{Helary2000}.     Furthermore,
index-based protocols have presented  better efficiency than protocols
based     on      the     tracking     of      specific     checkpoint
patterns~\cite{Alvisi1999,Vieira2006}.

Unfortunately,  no matter  what mechanism  is used  to trigger  forced
checkpoints, there  is not an  optimal checkpointing protocol  for all
checkpoint and communication  patterns~\cite{Tsai1999}. However, for a
particular  family   of  protocols,  a  stronger   (more  restrictive)
condition always produces a protocol that forces fewer checkpoints than
a  protocol based  on  a  weaker condition~\cite{Tsai1999}.   Evidence
obtained from experimental comparisons of checkpointing protocols also
indicate  that  stronger conditions  usually  lead  to more  efficient
protocols~\cite{Helary2000, Luo2009}.

For many  years, the  HMNR protocol~\cite{Helary2000}  implemented the
strongest  index-based checkpoint-inducing  condition.  This  protocol
has also  been called Fully Informed  (FI)~\cite{Tsai2005}, because it
propagates  detailed   information  about  the  causal   past  of  the
processes.   Eventually,  the  literature  began to  show  efforts  to
produce     further     optimized     versions     of     FI.      The
Lazy-FI~\cite{Tsai2007-lazy}     approach     applies     the     lazy
strategy~\cite{Vieira2001a} to  increment logical  clocks of  FI.  The
Fully Informed  aNd Efficient (FINE)  protocol~\cite{Luo2008, Luo2009}
is based on  a checkpoint condition stronger than the  one defined for
FI but  using the same control  information maintained by FI.   A lazy
version    of   this    protocol,    called    Lazy-FINE   was    also
proposed~\cite{Luo2011}.   The S-FI~\cite{simon2013scalable}  protocol
aims  to  take the  same  number  of  forced  checkpoints as  FI,  but
employing a  reduced amount of  information per message  exchanged, an
improvement   that   makes   the    protocol   more   scalable.    The
DCFI~\cite{simon2013delayed} delays non-forced checkpoints in order to
reduce the total number of checkpoints in the system.

The contributions  of the  paper are three.   Firstly, it  reviews the
FI~\cite{Helary2000}  and  Lazy-FI~\cite{Tsai2007-lazy}  checkpointing
protocols to single  out the similarities in logical  structure of the
conditions used by  them to trigger forced  checkpoints.  Secondly, it
proves a theorem that shows  that the checkpoint-inducing condition of
FI cannot  be strengthened  without respecting the  timestamping rules
that guarantee the absence of  useless checkpoints.  Thirdly, it shows
that  FINE~\cite{Luo2009} and  Lazy-FINE~\cite{Luo2011} fail  the test
established by the theorem and, as a consequence, that both algorithms
do not guarantee  the absence of useless  checkpoints.  These findings
cause a rollback in the history of communication-induced checkpointing
protocols: FI and Lazy-FI are back  as the most efficient protocols of
this family of index-based checkpointing protocols.

The rest of the paper is structured as follows.
Section~\ref{sec:fundamental} presents fundamental concepts.
Section~\ref{sec:index} addresses index-based checkpointing,
describing FI~\cite{Helary2000} and Lazy-FI~\cite{Tsai2007-lazy}.
Section~\ref{sec:attempts} presents FINE~\cite{Luo2009} and
Lazy-FINE~\cite{Luo2011}, the theorem about the correctness of FI
optimizations, and the checkpoint scenarios that show that these
protocols may lead to useless checkpoints. Finally,
Section~\ref{sec:conclusion} concludes the paper.

\section{Fundamental concepts}
\label{sec:fundamental}

This section defines the meaning of distributed computation,
checkpoint, consistent global checkpoint and the mechanisms used to
track whether checkpoints belong or not to a consistent
global checkpoint.

\subsection{Distributed computation}

A set of $n$ processes ($\proc{1}$, $\dots$, $\proc{n}$) that
communicate strictly via unicast messages forms a distributed computation. The
communication graph is complete, the channels are reliable, but the
transmission delays are unpredictable. There is no global clock or
shared memory.

Every process $\proc{i}$ starts with an event $\event{i}{1}$ and
executes a possibly infinite sequence of events
$(\event{i}{1},~\event{i}{2},\ldots)$. An internal event can only
influence the state of the process that has executed it. External
events can be the sending or the receiving of messages. Given global
time  absence,  events can  be  ordered  using  solely the  notion  of
``cause-and-effect'' enabled  by the flow of  information generated by
the occurrence of internal and external events.  Thus, causality can be
captured by  the \emph{causally  precedes} relation  over events  of a
distributed computation~\cite{Lamport1978}.

\begin{definition} [Causal precedence] Event
  $\event{i}{x}$ causally precedes $\event{j}{y}$
  ($\event{i}{x} \to \event{j}{y}$) if 
  \begin{itemize}
  \item $i = j$ and $y = x+1$, or 
  \item $\exists m: \event{i}{x} = \send(m)$ and
         $\event{j}{y} = \receive(m)$, or 
  \item $\exists \event{k}{z}: \event{i}{x} \to
    \event{k}{z} \land \event{k}{z} \to \event{j}{y}$.
\end{itemize}
\end{definition}

\subsection{Fault tolerance and checkpoints} \label{sec:ckpt}

We assume  the crash-recover fault  model, that is,  in the case  of a
failure, a process halts and looses its volatile state. During correct
execution, processes frequently save  their states to stable storage
to  make possible  the recovery  of  a process  (system) by  way of  a
rollback to  an earlier  consistent state  in the  case of  partial or
total system failure.

A checkpoint is the local state of a process that was saved on stable
storage.  Every process $\proc{i}$ has an initial checkpoint, denoted
by $\ckpt{i}{1}$, and other checkpoints can be saved along the
computation.  The $x$-th checkpoint of a process $\proc{i}$ is denoted
by $\ckpt{i}{x}$. An interval $\interv{i}{x}$ is the set of events
from $\ckpt{i}{x}$ to $\ckpt{i}{x+1}$, including $\ckpt{i}{x}$ but
excluding $\ckpt{i}{x+1}$.

Let us assume that a process $\proc{i}$ manages a logical clock
$\lc_i$ that is used to timestamp a checkpoint $\ckpt{i}{x}$ with
$\ckpt{i}{x}.t$ and a message $m$ with $m.t$. The following rules
guarantee that if $\ckpt{i}{x} \to \ckpt{j}{y}$ then $\ckpt{i}{x}.t
< \ckpt{j}{y}.t$~\cite{Helary2000}. These rules can be seen as a
specialization of the Lamport's clock~\cite{Lamport1978} that
increments $\lc_i$ only at the occurrence of checkpoints.

\begin{itemize}

\item $\proc{i}$ initializes $\lc_i$ at the beginning of the computation;

\item $\proc{i}$ increments $\lc_i$ before it saves a checkpoint
  $\ckptm$ and sets $\ckptm.t = \lc_i$.

\item when $\proc{i}$ sends a message $m$, it piggybacks $\lc_i$ on the
message (denoted $m.t$);

\item when $\proc{i}$ receives a message $m$, it sets $\lc_i$ to $\max(\lc_i,
m.t)$.

\end{itemize}

Fig.~\ref{fig:ccp} depicts a distributed computation.  Horizontal
lines represent processes, one line per process.  Time flows from left
to right. Slanted arrows represent messages. Black rectangles are
basic checkpoints.  The values of the logical
clocks---timestamps---associated with each event of interest to the
distributed computation are depicted as integers between parentheses.
A checkpoint interval is represented by a left-closed right-open line
segment, for example, $\interv{3}{1}$.

\begin{figure}[htbp]
  \centering
{\begin{picture}(230,125)(0,-25)

\put(0,85){
  \put(0,0){\large $\proc{1}$}
  \put(20,0){\vector(1,0){210}}
  \put(20,0){\drawckptijts{1}{1}{1}}
  \put(40,0){\vector(1,-2){20}}
  \put(48,-15){\drawmits{1}{1}}
  \put(90,0){\drawckptijts{1}{2}{2}}
  \put(130,0){\vector(1,-2){20}}
  \put(138,-15){\drawmits{4}{2}}
  \put(205,0){\drawckptijts{1}{3}{3}}
  }

\put(0,45){
  \put(0,0){\large $\proc{2}$}
  \put(20,0){\vector(1,0){210}}
  \put(20,0){\drawckptijts{2}{1}{1}}
  \put(65,0){\vector(1,-2){20}}
  \put(47,-15){\drawmits{2}{1}}
  \put(90,0){\drawckptijts{2}{2}{2}}
  \put(120,0){\vector(1,-2){20}}
  \put(125,-15){\drawmits{3}{2}}
  \put(175,0){\vector(1,2){20}}
  \put(185,15){\drawmits{5}{2}}
   }

\put(0,5){
  \put(0,0){\large $\proc{3}$}
  \put(20,0){\vector(1,0){210}}
  \put(20,0){\drawckptijts{3}{1}{1}}
  \put(19,-20){
    \put(0,-2){\line(0,1){4}}	
    \put(0,0){\line(1,0){80}}	
    \put(40,-10){$\interv{3}{1}$}
  }
  \put(100,0){\drawckptijts{3}{2}{2}}
  \put(160,0){\drawckptijts{3}{3}{3}}
  \put(180,0){\vector(1,2){20}}
  \put(190,15){\drawmits{6}{3}}
  }

\end{picture}}
\caption{Processes, checkpoints, and logical clocks}
\label{fig:ccp}
\end{figure}

\subsection{Consistency}

A consistent global checkpoint is formed by a set of checkpoints that are unrelated by causal precedence~\cite{Chandy1985}. In Fig.~\ref{fig:ccp}, for example, the set $\{\ckpt{1}{2},~\ckpt{2}{2},~\ckpt{3}{2}\}$ is a consistent global checkpoint. 

\begin{definition} [Consistent global checkpoint] A global checkpoint $\{\ckpt{1}{x1},~\ldots,~\ckpt{n}{xn}\}$  is consistent if \\ 
\centerline {$\forall i,j : \ckpt{i}{xi} \not\to \ckpt{j}{xj}$.}
\end{definition}


When two causally unrelated checkpoints cannot be part of the same consistent global checkpoint they must be connected by a sequence of messages called a \emph{zigzag path}~\cite{Netzer1995}.

\begin{definition} [Zigzag path] \label{def_zigzag} A sequence
  of messages $\mu = [m_{l1}, \ldots, m_{lq}]$ is a zigzag path from
  $\ckpt{i}{x}$ to $\ckpt{j}{y}$ if

  \begin{itemize}
  \item $\proc{i}$ sends $m_{l1}$ after $\ckpt{i}{x}$, and
  \item if $m_{lz}$, $1 \leq z < q$, is received by $\proc{k}$,
    then $m_{lz+1}$ is sent by $\proc{k}$ in the same or a later checkpoint
    interval,  and
  \item $m_{lq}$ is received by $\proc{j}$ before $\ckpt{j}{y}$.
  \end{itemize}
\end{definition}

The existence of a zigzag  path from $\ckpt{i}{x}$ to $\ckpt{j}{y}$ is
denoted  by $\ckpt{i}{x}  \zpath \ckpt{j}{y}$.  In Fig.~\ref{fig:ccp},
$[m_1, m_2]$ is a zigzag path from $\ckpt{1}{1}$ to $\ckpt{3}{2}$ such
that $\ckpt{1}{1}$  causally precedes $\ckpt{3}{2}$, being  an example
of a  \emph{causal zigzag path}.  The zigzag  path $[m_4, m_3]$  is an
example  of  a \emph{non-causal  zigzag  path}  from $\ckpt{1}{2}$  to
$\ckpt{3}{3}$.

\begin{definition} [Z-cycle] 
$\ckpt{i}{x} \zpath \ckpt{i}{x}$
\end{definition}

A zigzag path from a checkpoint to itself forms a \emph{Z-cycle} and
makes it impossible for this checkpoint to be part of any consistent
global checkpoint. A Z-cycle is the exact condition under which a
checkpoint becomes useless~\cite{Netzer1995}. In Fig.~\ref{fig:ccp},
$\ckpt{3}{3}$ is useless due to the Z-cycles $[m_6, m_3]$ and $[m_6,
  m_5, m_4, m_3]$.
 
\subsubsection{Z-consistent timestamping}

\begin{definition} [Z-consistent timestamping] A timestamping 
  is consistent with the existence of zigzag paths if

  \centerline{$\ckpt{i}{x} \zpath \ckpt{j}{y} \Rightarrow
       \ckpt{i}{x}.t < \ckpt{j}{y}.t$}    
\end{definition}

A Z-consistent timestamping does not admit a Z-cycle, say $C \zpath
C$, since the relationship $\ckptm.t < \ckptm.t$ is impossible using
integers as timestamps~\cite{Helary1997-vprec, Helary2000}.

Fig.~\ref{fig:ccp} does not present a Z-consistent timestamping, since
$\ckpt{3}{3} \zpath \ckpt{1}{3}$ and $\ckpt{3}{3}.t =
\ckpt{1}{3}.t$. Fig.~\ref{fig:z-cons} shows the same basic checkpoint
and communication pattern of Fig.~\ref{fig:ccp} augmented with one
forced checkpoint, represented by a black diamond. This extra
checkpoint allows the Z-consistent timestamping under the rules
presented in Section~\ref{sec:ckpt}. In the next section, forced
checkpoints induced by checkpointing protocols will guarantee the
enforcement of Z-consistent timestamping.

\begin{figure}[htbp]
  \centering
{\begin{picture}(230,85)(0,-5)

\put(0,65){
  \put(0,0){\large $\proc{1}$}
  \put(20,0){\vector(1,0){210}}
  \put(20,0){\drawckptts{1}}
  \put(40,0){\vector(1,-2){15}}
  \put(48,-15){\drawmits{1}{1}}
  \put(90,0){\drawckptts{2}}
  \put(130,0){\vector(1,-2){15}}
  \put(138,-15){\drawmits{4}{2}}
  \put(205,0){\drawckptts{3}}
  }

\put(0,35){
  \put(0,0){\large $\proc{2}$}
  \put(20,0){\vector(1,0){210}}
  \put(20,0){\drawckptijts{2}{1}{1}}
  \put(65,0){\vector(1,-2){15}}
  \put(47,-15){\drawmits{2}{1}}
  \put(90,0){\drawckptts{2}}
  \put(120,0){\vector(1,-2){15}}
  \put(125,-15){\drawmits{3}{2}}
  \put(175,0){\vector(1,2){15}}
  \put(185,15){\drawmits{5}{2}}
  \put(190,0){\drawfckpttsunder{2}}
  }

\put(0,5){
  \put(0,0){\large $\proc{3}$}
  \put(20,0){\vector(1,0){210}}
  \put(20,0){\drawckptts{1}}
  \put(100,0){\drawckptts{2}}
  \put(160,0){\drawckptts{3}}
  \put(190,0){\vector(1,2){15}}
  \put(200,15){\drawmits{6}{3}}
  }

\end{picture}}
\caption{Z-consistent timestamping}
\label{fig:z-cons}
\end{figure}

\section{Index-based Checkpointing} \label{sec:index}

This  section starts  with the  description  of a  partly-informed strategy  to
induce forced checkpoints. After  that, it presents the fully-informed
and lazy strategies describing data structures and checkpoint-inducing
conditions that can be used to reduce the number of forced checkpoints
in comparison  to the partly-informed strategy. 

\subsection{Partly-informed strategy}

The partly-informed  strategy produces a Z-consistent  timestamping by
not allowing logical  clocks to decrease along a zigzag  path.  Let us
consider zigzag  paths composed by  two messages $[m_2, m_1]$,  as the
ones  depicted in  Fig.~\ref{fig:strict}.   In both  figures $m_2$  is
received  by  $\proc{i}$  after  it  has  sent  $m_1$,  which  carries
$\proc{i}$  current  $\lc_i$.  In  Fig.~\ref{fig:stricta},  $\proc{i}$
receives  $m_2$ and  $m_2.t  =  m_1.t$, thus  $\proc{i}$  is sure  the
logical clock  has not decreased  along zigzag path $[m_2,  m_1]$.  In
this case, $\proc{i}$ is not required to take a forced checkpoint, and
this absence  of action is  represented in Fig.~\ref{fig:strict}  by a
not  operator  before  a black  diamond.   In  Fig.~\ref{fig:strictb},
$\proc{i}$ receives  $m_2$, and since  $m_2.t > m_1.t$ this  implies a
decrease  in the  logical clock  along zigzag  path $[m_2,  m_1]$.  To
avoid this,  $\proc{i}$ takes  a forced  checkpoint, represented  by a
black diamond, before delivering $m_2$ to the application.

\begin{figure}[htbp]
  \centering
  \subfloat[]{
{\begin{picture}(107,97)(-7,-7)

\put(0,75){
  \put(-7,0){\large $\proc{j}$}
  \put(7,-1){$\ldots$}
  \put(20,0){\vector(1,0){80}}
  \put(23,0){\drawckptijts{j}{y}{3}}
  \put(60,0){\vector(1,-2){17.5}}
  \put(68,-15){\drawmits{2}{3}}
  }

\put(0,40){
  \put(-7,0){\large $\proc{i}$}
  \put(7,-1){$\ldots$}
  \put(20,0){\vector(1,0){80}}
  \put(23,0){\drawckptts{3}}
  \put(33,-18){\drawmits{1}{3}}
  \put(48,0){\vector(1,-2){17.5}}
  \put(60,2){\notfckptmark}		
  }

\put(0,5){
  \put(-7,0){\large $\proc{k}$}
  \put(7,-1){$\ldots$}
  \put(20,0){\vector(1,0){80}}
  \put(23,0){\drawckptts{3}}
  \put(48,2){\notfckptmark}  
  \put(80,0){\drawckptijts{k}{z}{4}}
  }

\end{picture}} \label{fig:stricta}}
  \subfloat[]{
{\begin{picture}(107,97)(-7,-7)

\put(0,75){
  \put(-7,0){\large $\proc{j}$}
  \put(7,-1){$\ldots$}
  \put(20,0){\vector(1,0){80}}
  \put(27,0){\drawckptijts{j}{y}{4}}
  \put(60,0){\vector(1,-2){17.5}}
  \put(68,-15){\drawmits{2}{4}}
  }

\put(0,40){
  \put(-7,0){\large $\proc{i}$}
  \put(7,-1){$\ldots$}
  \put(20,0){\vector(1,0){80}}
  \put(23,0){\drawckptts{3}}
  \put(48,0){\vector(1,-2){17.5}}
  \put(33,-18){\drawmits{1}{3}}
  \put(60,0){\drawfckptijts{i}{x+1}{4}}
  }

\put(0,5){
  \put(-7,0){\large $\proc{k}$}
  \put(7,-1){$\ldots$}
  \put(20,0){\vector(1,0){80}}
  \put(23,0){\drawckptts{2}}
  \put(48,2){\notfckptmark}  
  \put(80,0){\drawckptijts{k}{z}{4}}
  }

\end{picture}} \label{fig:strictb}}
  \caption{Partly-informed strategy}
  \label{fig:strict}
\end{figure}

This approach can be implemented by the following control
structures~\cite{Helary2000}:

\begin{itemize}

\item Boolean array $\sentto_i[1\dots n]$: $\sentto_i[j]$ indicates
  whether processes $\proc{i}$ has sent a message to $\proc{j}$ in the
  current checkpointing interval.

\item Array $\minto_i[1\dots n]$: $\minto_i[j]$ indicates the
  timestamp of the first message sent in the current interval by
  $\proc{i}$ to $\proc{j}$ or $+\infty$ if no such message has been
  sent.
\end{itemize}

The  partly-informed  checkpoint-inducing  condition $\cpi$,  evaluated  by
process $\proc{i}$  when it receives  a message $m$, can  be expressed
as~\cite{Helary2000}:

\begin{center}
  $\cpi \equiv \exists k: \sentto_i[k] \land
              m.t > \minto_i[k]$
\end{center}

As a consequence  of the partly-informed strategy, $\proc{k}$ does  not need to
take a checkpoint before the reception  of $m_1$, even when $m_1.t$ is
greater  than  $\lc_k$,  as   in  Fig.~\ref{fig:strictb}.

\subsection{Fully-informed strategy}

Taking into account  a zigzag path $[m_2, m_1]$  from $\ckpt{j}{y}$ to
$\ckpt{k}{z}$, like  the ones  depicted on  Fig.~\ref{fig:strict}, the
fully-informed   strategy~\cite{Helary2000}    explores   $\proc{i}$'s
information  about   $\ckpt{k}{z}.t$  to  establish   if  Z-consistent
timestamping  is being  preserved. This  information is  propagated by
piggybacking  timestamp  vectors  and checkpoint  vectors  that  carry
causal information about $\proc{k}$ to $\proc{i}$.

\subsubsection*{Strengthening the partly-informed strategy}

Let us assume that each process $\proc{i}$ maintains and propagates an
array with information about the logical clock of all processes in the
computation.  Let   us  define  the  vector   $\clockv_i$,  such  that
$\clockv_i[i]$  is equivalent  to  $\lc_i$ and  $\clockv_i[k]$ is  the
highest  value  of  $\lc_k$  that  $\proc{i}$ knows  about  due  to  a
traditional piggybacking  mechanism. Fig.~\ref{fig:clockv} illustrates
timestamp vectors, with  the logical clocks of  processes and messages
emphasized in boldface.

\begin{figure}[htbp]
  \centering
  \subfloat[$m_3.t = m_3.\clockv \lbrack 3 \rbrack
            \land\ \ckpt{1}{2}.t < \ckpt{3}{3}.t$]
           {
{\begin{picture}(230,105)(0,-10)

\put(0,75){
  \put(0,0){\large $\proc{1}$}
  \put(20,0){\vector(1,0){210}}
  \put(20,0){\drawckptcl{$\myclock{1}$ 0 0}}
  \put(50,0){\drawckptijcl{1}{2}{$\myclock{2}$ 0 0}}
  \put(155,0){\drawcl{$\myclock{2}$ 1 2}}
  \put(185,0){\vector(1,-2){17.5}}
  \put(193,-15){\drawmicl{3}{$\myclock{2}$ 1 2}}
  }

\put(0,40){
  \put(0,0){\large $\proc{2}$}
  \put(20,0){\vector(1,0){210}}
  \put(20,0){\drawckptcl{0 $\myclock{1}$ 0}}
  \put(70,0){\vector(1,-2){17.5}}
  \put(77,-15){\drawmicl{1}{0 $\myclock{1}$ 0}}
  \put(185,2){\notfckptmark}     
  \put(205,0){\drawclunder{2 $\myclock{2}$ 2}}
  }

\put(0,5){
  \put(0,0){\large $\proc{3}$}
  \put(20,0){\vector(1,0){210}}
  \put(20,0){\drawckptcl{0 0 $\myclock{1}$}}
  \put(50,0){\drawckptcl{0 0 $\myclock{2}$}}
  \put(92,0){\drawclunder{0 1 $\myclock{2}$}}
  \put(115,0){\vector(1,2){35}}
  \put(122,20){\drawmicl{2}{0 1 $\myclock{2}$}}
  \put(200,0){\drawckptijcl{3}{3}{0 1 $\myclock{3}$}}
  }				 

\end{picture}} \label{fig:clockva}}
           
  \subfloat[$m_3.t = \clockv_2 \lbrack 3 \rbrack
             \land\ \ckpt{1}{2}.t < \ckpt{3}{3}.t$]    
           {
{\begin{picture}(230,105)(0,-10)

\put(0,75){
  \put(0,0){\large $\proc{1}$}
  \put(20,0){\vector(1,0){210}}
  \put(20,0){\drawckptcl{$\myclock{1}$ 0 0}}
  \put(50,0){\drawckptijcl{1}{2}{$\myclock{2}$ 0 0}}
  \put(185,0){\vector(1,-2){17.5}}
  \put(193,-15){\drawmicl{3}{$\myclock{2}$ 0 0}}
}

\put(0,40){
  \put(0,0){\large $\proc{2}$}
  \put(20,0){\vector(1,0){210}}
  \put(20,0){\drawckptcl{0 $\myclock{1}$ 0}}
  \put(70,0){\vector(1,-2){17.5}}
  \put(77,-15){\drawmicl{1}{0 $\myclock{1}$ 0}}
  \put(115,2){\notfckptmark}     	    
  \put(140,0){\drawcl{0 $\myclock{2}$ 2}}
  \put(185,2){\notfckptmark}     
  \put(205,0){\drawclunder{2 $\myclock{2}$ 2}}
}

\put(0,5){
  \put(0,0){\large $\proc{3}$}
  \put(20,0){\vector(1,0){210}}
  \put(20,0){\drawckptcl{0 0 $\myclock{1}$}}
  \put(50,0){\drawckptcl{0 0 $\myclock{2}$}}
  \put(92,0){\drawclunder{0 1 $\myclock{2}$}}
  \put(115,0){\vector(1,2){17.5}}
  \put(122,20){\drawmicl{2}{0 1 $\myclock{2}$}}
  \put(200,0){\drawckptijcl{3}{3}{0 1 $\myclock{3}$}}
  }				 

\end{picture}} \label{fig:clockvb}}
                    
  \subfloat[$m_3.t > \lc_2 \land\ m_3.\greater \lbrack 3 \rbrack $]
           {
{\begin{picture}(230,105)(0,-10)

\put(0,75){
  \put(0,0){\large $\proc{1}$}
  \put(20,0){\vector(1,0){210}}
  \put(20,0){\drawckptcl{$\myclock{1}$\gmark \gmark}}
  \put(50,0){\drawckptijcl{1}{1}{$\myclock{2}$\gmark \gmark}}
  \put(185,0){\vector(1,-2){17.5}}
  \put(193,-15){\drawmicl{3}{$\myclock{2}$\gmark \gmark}}
  }

\put(0,40){
  \put(0,0){\large $\proc{2}$}
  \put(20,0){\vector(1,0){210}}
  \put(20,0){\drawckptcl{\gmark$\myclock{1}$\gmark}}
  \put(70,0){\vector(1,-2){17.5}}
  \put(77,-15){\drawmicl{1}{\gmark$\myclock{1}$\gmark}}
  \put(115,2){\notfckptmark}
  \put(140,0){\drawcl{\gmark$\myclock{1}$\emark}}	  
  \put(183,0){\drawfckptijcl{2}{1}{\gmark$\myclock{2}$\gmark}}
  \put(205,0){\drawclgunder{\emark$\myclock{2}$\gmark}}
  }

\put(0,5){
  \put(0,0){\large $\proc{3}$}
  \put(20,0){\vector(1,0){210}}
  \put(20,0){\drawckptcl{\gmark \gmark$\myclock{1}$}}
  \put(92,0){\drawclgunder{\gmark \emark$\myclock{1}$}}
  \put(115,0){\vector(1,2){17.5}}
  \put(125,20){\drawmicl{2}{\gmark \emark $\myclock{1}$}}
  \put(200,0){\drawckptijcl{3}{2}{\gmark \gmark$\myclock{2}$}}
  }				 

\end{picture}} \label{fig:clockvc}}
  \caption{Strengthening the partly-informed strategy}
  \label{fig:clockv}
\end{figure}

In  Fig.~\ref{fig:clockva}, there  is a  message receive  where the
$\cpi$  condition is  true. $\proc{2}$  sends a  message $m_1$  to
$\proc{3}$ with $m_1.t = 1$ and it receives $m_3$ from $\proc{1}$ with
$m_3.t = 2$.  However, since $m_3.\clockv[3]  = 2$ it does not need to
save  a   forced  checkpoint.   In  this   case,  $\ckpt{1}{2}  \zpath
\ckpt{3}{3}$,  $\ckpt{1}{2}.t  <  \ckpt{3}{3}.t$  and  a  Z-consistent
timestamping is enforced.

In  Fig.~\ref{fig:clockvb},  when  $\proc{2}$ receives  $m_2$  from
$\proc{3}$ a  similar scenario occurs. The  partly-informed condition  is true,
since $m_1.t = 1$ and $m_2.t  = 2$. However, $m_2.\clockv[3] = 2$ and
$\proc{2}$ does not need to  save a forced checkpoint. When $\proc{2}$
receives $m_3$ from $\proc{1}$ with  $m_3.t = 2$, the partly-informed condition
is  true  again.  Nevertheless,  since  $\clockv_2[3]  = 2$  a  forced
checkpoint is not necessary.  As in the previous example, $\ckpt{1}{2}
\zpath   \ckpt{3}{3}$,   $\ckpt{1}{2}.t   <   \ckpt{3}{3}.t$   and   a
Z-consistent timestamping is enforced.

A  variation $\cfione$  of the  partly-informed checkpoint-inducing  condition,
evaluated by  process $\proc{i}$  when it receives  a message  $m$ and
that  takes into  account  the  value of  $\lc_k$  up to  $\proc{i}$'s
knowledge, can be expressed as~\cite{Helary2000}:

\begin{center}
  {$\cfione \equiv \exists k: \sentto_i[k] \land
    m.t > \minto_i[k]\ \land$ \\
$m.t > max(\clockv_i[k], m.\clockv[k])$}
\end{center}

The $\cfione$ condition can also be implemented with a reduced set of
data structures~\cite{Helary2000} to minimize the cost of piggybacking
information about the processes causal past. In the FI algorithm,
there is no need for a process to know exactly which is the clock of
another process. It is just important to know if their clocks are
synchronized or not. Thus, FI can be rewritten using $\lc$ and a
vector of booleans $\greater$ instead of the vector of integers
$\clockv$.

Each entry $\greater_i[k]$ is true if to the knowledge of $\proc{i}$
its clock is greater than the clock of $\proc{k}$ ($\greater_i[k]
\equiv \clockv_i[i] > \clockv_i[k]$). When an entry $\greater_i[k]$ is
false, $\clockv_i[i]$ must be equal to $\clockv_i[k]$; their clocks
are synchronized. Due to the update rules of the timestamps, it is not
possible that $\clockv_i[i] < \clockv_i[k]$. Fig.~\ref{fig:clockvc}
presents a distributed computation with $\greater$ vectors,  but
instead of using true and false to indicate the truthfulness/falseness
of the predicate we have used the  signs $>$ or $=$. The logical clock
of processes  and messages  are emphasized in  boldface and  they are
maintained in their positions in the vectors.

When $\proc{2}$ receives $m_2$ from $\proc{3}$ with $m_2.t = 1$ it
trivially does not need to take a forced checkpoint. When $\proc{2}$
receives $m_3$ from $\proc{1}$ with $m_3.t = 2$, since
$m_3.\greater[3]$, $\proc{2}$ takes a forced checkpoint before
delivering $m_3$. The FI condition can be stated as~\cite{Helary2000}:

\begin{center}
  $\cfione \equiv \exists k: \sentto_i[k] \land  m.\greater[k]
  \land m.t > \lc_i$
\end{center}

\subsubsection*{Breaking $[\mu, m]$ Z-cycles}  

Unfortunately, the $\cfione$ condition strengthens the partly-informed strategy
more than it should. In Fig.~\ref{fig:fully}, $\proc{2}$
receives $m_3$ from $\proc{1}$ with $m_3.t = 2$ and $m_3.\greater[3]$
indicates that $\lc_3$ has already reached 2. However, if $\proc{2}$
had not taken a checkpoint before delivering $m_3$, a Z-cycle $[m_2,
  m_3, m_1]$ would had been formed and $\ckpt{3}{2}$ could had become
a useless checkpoint.
Checkpoint vectors with taken marks can be used to prevent this
Z-cycle and any one composed by a causal component $\mu$ and a
single message $m$. 

\begin{figure}[htbp]
  \centering  
{\begin{picture}(230,130)(0,-15)

\put(0,85){
  \put(0,0){\large $\proc{1}$}
  \put(20,0){\vector(1,0){210}}
  \put(20,0){\drawckpttabcldv{$\myclock{1}$ & \gmark & \gmark}
                             {\nmark{1} & \umark{0} & \umark{0}}}
  \put(60,0){\drawckptijtabcldv{1}{2}{$\myclock{2}$ & \gmark & \gmark}
                                     {\nmark{2} & \umark{0} & \umark{0}}}
  \put(140,0){\drawtabcldv{$\myclock{2}$ & \gmark & \emark}
                          {\nmark{2} & \umark{1} & \nmark{2}}}
  \put(178,0){\vector(1,-2){20}}
  \put(196,-10){\drawmitabcldv{3}{$\myclock{2}$ & \gmark & \emark}
                          {\nmark{2} & \umark{1} & \nmark{2}}}
  }

\put(0,45){
  \put(0,0){\large $\proc{2}$}
  \put(20,0){\vector(1,0){210}}
  \put(20,0){\drawckpttabcldv{\gmark & $\myclock{1}$ & \gmark}
                             {\umark{0} & \nmark{1} & \umark{0}}}
  \put(40,0){\vector(1,-2){20}}
  \put(56,-10){\drawmitabcldv{1}{\gmark & $\myclock{1}$ & \gmark}
      {\umark{0} & \nmark{1} & \umark{0}}}
  \put(180,0){\drawfckptijtabcldv{2}{2}{\gmark & $\myclock{2}$ & \gmark}
               {\umark{0} & \nmark{2} & \umark{0}}}
  \put(205,0){\drawtabcldvunder{\emark & $\myclock{2}$ & \emark}{2 & 2 & 2}}
  }

\put(0,5){
  \put(0,0){\large $\proc{3}$}
  \put(20,0){\vector(1,0){210}}
  \put(20,0){\drawckpttabcldv{\gmark & \gmark & $\myclock{1}$}
                             {\umark{0} & \umark{0} & \nmark{1}}}
  \put(60,0){\drawtabcldvunder{\gmark & \emark & $\myclock{1}$}{\umark{0} & 1 & 1}}
  \put(90,0){\drawckptijtabcldv{3}{2}{\gmark & \gmark & $\myclock{2}$}{\umark{0} & \umark{1} & \nmark{2}}}
  \put(120,0){\vector(1,4){20}}
  \put(130,26){\drawmitabcldv{2}{\gmark & \gmark & $\myclock{2}$}
   	 {\umark{0} & \umark{1} & \nmark{2}}}
  }				 

\end{picture}}
  \caption{Fully-informed strategy}
  \label{fig:fully}
\end{figure}

\begin{figure}[htbp]
  \centering
  \begin{tabular}{|c|} \hline 
    {\parbox{9.0cm}{
        \begin{algorithmic}
          \STATE \hspace{-1em}\textit{take$\_$checkpoint():}          
          \FORALL{$k$} 
          \STATE $\sentto_i[k]$ \attr \FALSE;
          \ENDFOR
          \FORALL{$k \neq i$} 
          \STATE $\taken_i[k]$ \attr \TRUE;
          \STATE $\greater_i[k]$ \attr \TRUE;
          \ENDFOR
          \STATE $\lc_i$ \attr $\lc_i + 1$; 
          \STATE \textit{Save the current state on stable memory;}
          \STATE $\ckptv_i[i]$ \attr $\ckptv_i[i]+1$; 
        \end{algorithmic}
        \medskip
        \begin{algorithmic}
          \STATE \hspace{-1em} \textbf{$P_i$'s initialization:}          
          \FORALL{$k$} 
          \STATE $\ckptv_i[k]$ \attr 0; \ENDFOR
          \STATE $\lc_i$ \attr $0$;          
          \STATE $\taken_i[i]$ \attr \FALSE;
          \STATE $\greater_i[i]$ \attr \FALSE;
          \STATE \textit{take$\_$checkpoint()};
        \end{algorithmic}
        \medskip
        \begin{algorithmic}
          \STATE \hspace{-1em}\textbf{$P_i$ sends a message to $P_k$:}
          \STATE $\sentto_i[k]$ \attr \TRUE;
          \STATE $\send(m, \lc_i, \greater_i, \ckptv_i, \taken_i)$ to $P_k$;
        \end{algorithmic}
    }} \\ \hline
  \end{tabular}
\caption{FI protocol~\cite{Helary2000} (Part 1)} \label{alg:orig-HMNR1}
\end{figure}

\begin{figure}[htbp]
  \centering
  \begin{tabular}{|c|} \hline  
    {\parbox{9.0cm}{
        \begin{algorithmic} 
          \STATE \hspace{-1em}\textit{FI$\_1()$:}
          \RETURN {$\exists k:\sentto_i[k] \land m.\greater[k]
            \land m.t > \lc_i;$}
        \end{algorithmic} 
        \medskip
        \begin{algorithmic}
          \STATE \hspace{-1em}\textit{FI$\_2()$:}          
          \RETURN $m.\ckptv[i] = \ckptv_i[i] \land m.\taken[i]$;
        \end{algorithmic}        
        \smallskip
        \begin{algorithmic} 
          \STATE \hspace{-1em}\textbf{$P_i$ receives a message from $P_j$:}
          \IF {\textit{FI$\_1() \lor$ FI$\_2()$}}
          \STATE \textit{take$\_$checkpoint()};
          \ENDIF
          \IF {$m.t > \lc_i$}
          \STATE $\lc_i$ \attr $m.t$
          \FORALL{$k \neq i$}
          \STATE $\greater_i[k] = m.\greater[k]$;
          \ENDFOR
          \ELSIF {$m.t = \lc_i$}
          \FORALL{$k \neq i$}
          \STATE $\greater_i[k] = \greater_i[k] \land m.\greater[k]$;
          \ENDFOR
          \ENDIF
          \FORALL{$k \neq i$}          
          \IF {$m.\ckptv[k] > \ckptv_i[k]$}
          \STATE $\ckptv_i[k]$ \attr $m.\ckptv[k]$; 
          \STATE $\taken_i[k]$ \attr $m.\taken[k]$;
          \ELSIF {$m.\ckptv[k] = \ckptv_i[k]$}
          \STATE $\taken_i[k]$ \attr $\taken_i[k] \lor m.\taken[k]$;
          \ENDIF
          \ENDFOR
          \STATE \textit{$\deliver(m)$}
        \end{algorithmic}
    }} \\ \hline
  \end{tabular}
\caption{FI protocol~\cite{Helary2000}  (Part 2)} \label{alg:orig-HMNR2}
\end{figure}

Let us  assume that each  process $\proc{i}$ maintains and  propagates a
variation of the traditional vector clock~\cite{Fidge1991} that counts
how many checkpoints have been taken during the computation.  The
entry $\ckptv_i[i]$ expresses exactly  the number of checkpoints taken
by $\proc{i}$ and $\ckptv_i[k]$ counts how many checkpoints $\proc{k}$
has taken to the best knowledge of $\proc{i}$.

An extra boolean array  $\taken_i$\footnote{The $\taken$ array has the
  opposite    meaning    of     the    boolean    array    $\simple_i$
  ($\taken_i[j] \equiv \lnot \simple_i[j]$), defined in the context of
  a checkpointing  protocol~\cite{BHR2001} that enforces  the Rollback
  Dependency Trackability  property~\cite{Wang1997}.}  can be  used to
indicate  if the  causal  components ending  in  the current  interval
contain a  checkpoint. An entry  $\taken_i[k]$ is  true if there  is a
causal     zigzag     path     from     $\ckpt{k}{\ckptv_i[k]}$     to
$\ckpt{i}{\ckptv_i[i] +  1}$ and  this causal  zigzag path  includes a
checkpoint~\cite{Helary2000}.      Fig.~\ref{fig:fully}    shows     a
computation with $\greater$ vectors  and checkpoint vectors with taken
marks; an  entry $k$  of the  checkpoint array  is underlined  only if
$\taken[k]$ is true.

The condition to break $[\mu, m]$ Z-cycles can be expressed using the
following condition $\cfitwo$, evaluated by process $\proc{i}$ when it
receives a message $m$~\cite{Helary2000}:

\begin{center}
  $\cfitwo \equiv m.ckpt[i] = ckpt_i[i] \land m.\taken[i]$
\end{center}

The checkpoint-inducing condition of FI is an or-operation of the conditions
explained before:

\begin{center}
  $\cfi \equiv \cfione \lor \cfitwo$
\end{center}

Figs.~\ref{alg:orig-HMNR1} and ~\ref{alg:orig-HMNR2} present the code that implements FI~\cite{Helary2000}. 

\subsection{Lazy strategy}

The lazy strategy  reduces the number of  forced checkpoints necessary
to  produce a  Z-consistent  timestamping by  detecting  when a  basic
checkpoint can be  taken by a process  $\proc{i}$ without incrementing
its logical clock $\lc_i$.  Let us consider the situation of a process
$\proc{i}$  that  receives a  single  message  $m_1$ in  a  checkpoint
interval and later  decides to take a basic checkpoint  that ends this
interval,    as   depicted    in   Fig.~\ref{fig:lazystrategy}.     If
$m_1.t <  \ckpt{i}{x}.t$, $\proc{i}$ can  reuse the same  timestamp of
$\ckpt{i}{x}$       to       label       $\ckpt{i}{x+1}$       because
$\ckpt{k}{z}.t    <    \ckpt{i}{x+1}.t$    will   still    hold,    as
Fig.~\ref{fig:lazystrategya}      shows.      However,      if
$m_1.t \geq \ckpt{i}{x}.t$, $\proc{i}$ must increment $\lc_i$ to label
$\ckpt{i}{x+1}$ in order to  produce a Z-consistent timestamping where
$\ckpt{k}{z}.t     <     \ckpt{i}{x+1}.t$,      as     depicted     in
Fig.~\ref{fig:lazystrategyb} and Fig.~\ref{fig:lazystrategyc}.

\begin{figure}[htbp]
  \centering
  \subfloat[]{
{\begin{picture}(95,60)(-7,-7)

\put(0,40){
  \put(-7,0){\large $\proc{i}$}
  \put(7,-1){$\ldots$}
  \put(20,0){\vector(1,0){70}}
  \put(23,0){\drawckptijts{i}{x}{2}}
  \put(65,0){\drawckptijts{i}{x+1}{2}}
  }

\put(0,5){
  \put(-7,0){\large $\proc{k}$}
  \put(7,-1){$\ldots$}
  \put(20,0){\vector(1,0){70}}
  \put(23,0){\drawckptijts{k}{z}{1}}
  \put(35,0){\vector(1,2){17.5}}
  \put(42,10){\drawmits{1}{1}}
  }

\end{picture}} \label{fig:lazystrategya}}
  
  \subfloat[]{
{\begin{picture}(95,60)(-7,-7)

\put(0,40){
  \put(-7,0){\large $\proc{i}$}
  \put(7,-1){$\ldots$}
  \put(20,0){\vector(1,0){70}}
  \put(23,0){\drawckptijts{i}{x}{2}}
  \put(50,4){$+$}
  \put(65,0){\drawckptijts{i}{x+1}{3}}
  }

\put(0,5){
  \put(-7,0){\large $\proc{k}$}
  \put(7,-1){$\ldots$}
  \put(20,0){\vector(1,0){70}}
  \put(23,0){\drawckptijts{k}{z}{2}}
  \put(35,0){\vector(1,2){17.5}}
  \put(42,10){\drawmits{1}{2}}
  }

\end{picture}} \label{fig:lazystrategyb}}
  \subfloat[]{
{\begin{picture}(95,60)(-7,-7)

\put(0,40){
  \put(-7,0){\large $\proc{i}$}
  \put(7,-1){$\ldots$}
  \put(20,0){\vector(1,0){70}}
  \put(23,0){\drawckptijts{i}{x}{2}}
  \put(50,4){$+$}  
  \put(65,0){\drawckptijts{i}{x+1}{4}}
  }

\put(0,5){
  \put(-7,0){\large $\proc{k}$}
  \put(7,-1){$\ldots$}
  \put(20,0){\vector(1,0){70}}
  \put(23,0){\drawckptijts{k}{z}{3}}
  \put(35,0){\vector(1,2){17.5}}
  \put(42,10){\drawmits{1}{3}}
  }

\end{picture}} \label{fig:lazystrategyc}}
  \caption{Lazy indexing strategy}
  \label{fig:lazystrategy}
\end{figure}

The  lazy   strategy  can  be   implemented  by  introducing   a  flag
$\increment$ that signals $\proc{i}$  it must increment $\lc_i$ before
taking a basic checkpoint. This flag  is set to false in the beginning
of each checkpoint interval and is set to true whenever a message with
$m.t \geq lc_i$ is received.  The  setting of the $\increment$ flag is
illustrated in Fig.~\ref{fig:lazystrategy} by a $+$ sign.

\subsubsection*{Lazy-FI protocol}

Let us try to apply the lazy approach to FI using the vector
$\greater$. In Fig.~\ref{fig:lazy-greatera}, when $\proc{2}$
receives $m_5$ from $\proc{1}$ with a greater clock, it can verify
that $\proc{3}$ have already reached the same clock. However, due to
the lazy strategy, $\proc{2}$ does not know whether $\proc{3}$ will
increase its clock to save the next checkpoint. Thus, a forced
checkpoint before the delivering of $m_5$ will be required in order to
guarantee a Z-consistent timestamping.

\begin{figure}[htbp]
  \centering
\begin{picture}(230,130)(0,-5)

  \put(0,110){
    \put(0,0){\large $\proc{1}$}
    \put(20,0){\vector(1,0){210}}
    \put(20,0){\drawckptcl{$\myclock{1}$\gmark \gmark \gmark}}
    \put(60,4){$+$}
    \put(90,0){\drawckptijcl{1}{2}{$\myclock{2}$\gmark \gmark \gmark}}
    \put(156,4){$+$}
    \put(195,0){\vector(1,-2){17.5}}
    \put(170,-15){\drawmicl{5}{$\myclock{2}$\gmark \emark \gmark}}
}

\put(0,75){
  \put(0,0){\large $\proc{2}$}
  \put(20,0){\vector(1,0){210}}
  \put(20,0){\drawckptcl{\gmark $\myclock{1}$\gmark \gmark}}
  \put(90,0){\vector(1,-2){17.5}}
  \put(100,-18){\drawmicl{3}{\gmark $\myclock{1}$\gmark \gmark}}
  \put(200,0){\drawfckpttsunder{\emark \gmark$\myclock{2}$\gmark}}
  \put(212,4){$+$}		         
}

\put(0,40){
  \put(0,0){\large $\proc{3}$}
  \put(20,0){\vector(1,0){210}}
  \put(20,0){\drawckptcl{\gmark \gmark $\myclock{1}$\gmark}}
  \put(45,0){\vector(1,4){17.5}}
  \put(54,50){\drawmicl{1}{\gmark \gmark $\myclock{1}$\gmark}}
  \put(64,4){$+$}	
  \put(90,0){\drawckptclunder{\gmark \gmark $\myclock{2}$\gmark}}
  \put(140,0){\vector(1,4){17.5}}
  \put(116,50){\drawmicl{4}{\gmark \gmark$\myclock{2}$ \gmark}}
  \put(200,0){\drawckptijcl{3}{3}{\gmark \gmark$\myclock{2}$ \gmark}}
}

\put(0,5){
  \put(0,0){\large $\proc{4}$}
  \put(20,0){\vector(1,0){210}}
  \put(20,0){\drawckptcl{\gmark \gmark \gmark$\myclock{1}$}}
  \put(50,0){\vector(1,2){17.5}}
  \put(56,14){\drawmicl{2}{\gmark \gmark \gmark$\myclock{1}$}}
}

\end{picture}
  \caption{$\lc$ and $\greater$ are not enough to implement Lazy-FI}
  \label{fig:lazy-greatera}
\end{figure}
           
\begin{figure}[htbp]
  \centering
  \subfloat[$\lc$ and $\equalincr$]
           {
\begin{picture}(230,170)(0,-5)

  \put(0,145){
    \put(0,0){\large $\proc{1}$}
    \put(20,0){\vector(1,0){210}}
    \put(20,0){\drawckptcl{$\myclock{1}$\geqmark \geqmark \geqmark \geqmark}}
    \put(60,4){$+$}
    \put(100,0){\drawckptijcl{1}{2}{$\myclock{2}$\geqmark \geqmark \geqmark \geqmark}}    
    \put(164,4){$+$}
    \put(202,0){\vector(1,-2){17.5}}
    \put(175,-15){\drawmicl{7}{$\tsincr{\myclock{2}}$\geqmark \imark \geqmark \geqmark}}
}

\put(0,110){
  \put(0,0){\large $\proc{2}$}
  \put(20,0){\vector(1,0){210}}
  \put(20,0){\drawckptcl{$\geqmark \myclock{1} \geqmark \geqmark \geqmark$}}
  \put(80,0){\vector(1,-2){17.5}}
  \put(88,-15){\drawmicl{4}{$\geqmark \myclock{1} \geqmark \geqmark \geqmark$}}
  \put(200,2){\notfckptmark}    
}

\put(0,75){
  \put(0,0){\large $\proc{3}$}
  \put(20,0){\vector(1,0){210}}
  \put(20,0){\drawckptcl{\geqmark \geqmark $\myclock{1}$\geqmark \geqmark}}
  \put(45,0){\vector(1,4){17.5}}
  \put(58,50){\drawmicl{2}{\geqmark \geqmark $\myclock{1}$\geqmark \geqmark}} 
  \put(58,4){$+$}	
  \put(80,0){\drawckptclunder{\geqmark \geqmark $\myclock{2}$\geqmark \geqmark}}
  \put(120,4){$+$}
  \put(150,0){\vector(1,4){17.5}}
  \put(128,50){\drawmicl{6}{\geqmark\geqmark$\tsincr{\myclock{2}}$\geqmark \geqmark}} 
  \put(200,0){\drawckptijcl{3}{3}{\geqmark \geqmark$\myclock{3}$\geqmark \geqmark}}
}

\put(0,40){
  \put(0,0){\large $\proc{4}$}
  \put(20,0){\vector(1,0){210}}
  \put(20,0){\drawckptcl{\geqmark \geqmark \geqmark $\myclock{1}$\geqmark}}
  \put(36,0){\vector(1,-2){17.5}}	
  \put(46,-18){\drawmicl{1}{\geqmark \geqmark \geqmark $\myclock{1}$\geqmark}}
  \put(45,0){\vector(1,2){17.5}}
  \put(51,14){\drawmicl{3}{\geqmark \geqmark \geqmark $\myclock{1}$\geqmark}}
}

\put(0,05){
  \put(0,0){\large $\proc{5}$}
  \put(20,0){\vector(1,0){210}}
  \put(20,0){\drawckptcl{\geqmark \geqmark \geqmark \geqmark $\myclock{1}$}}
  \put(50,-8){$+$}
  \put(80,0){\drawckptcl{\geqmark \geqmark \geqmark \geqmark $\myclock{2}$}}
  \put(104,0){\vector(1,4){17.5}}
  \put(116,50){\drawmicl{5}{\geqmark \geqmark \geqmark \geqmark $\myclock{2}$}}
}

\end{picture}
           
  \subfloat[$\lc$, $\equalincr$, $ckptv$, and $\taken$]
           {
\begin{picture}(230,200)(0,-5)

  \put(0,165){
    \put(0,0){\large $\proc{1}$}
    \put(20,0){\vector(1,0){210}}
    \put(20,0){\drawckpttabcldv{$\myclock{1}$ & \geqmark & \geqmark & \geqmark & \geqmark}
                                 {\nmark{1} & \umark{0} & \umark{0} & \umark{0} & \umark{0}}}
    \put(62,4){$+$}
    \put(100,0){\drawckptijtabcldv{1}{2}{$\myclock{2}$ & \geqmark & \geqmark & \geqmark & \geqmark}
    {\nmark{2} & \umark{0} & \umark{1} & \umark{0} & \umark{0}}}
    \put(166,4){$+$}
    \put(186,0){\vector(1,-4){10}}
    \put(196,-10){\drawmitabcldv{7}{$\tsincr{\myclock{2}}$ & \geqmark & \imark & \geqmark & \geqmark}
                   {\nmark{2} & \umark{1} & \nmark{2} & \umark{1} & \nmark{2}}}
}

\put(0,125){
  \put(0,0){\large $\proc{2}$}
  \put(20,0){\vector(1,0){210}}
  \put(20,0){\drawckpttabcldv{\geqmark & $\myclock{1}$ & \geqmark & \geqmark & \geqmark}
               {\umark{0} & \nmark{1} & \umark{0} & \umark{0} & \umark{0}}}
  \put(70,0){\vector(1,-4){10}}
  \put(80,-12){\drawmitabcldv{4}{\geqmark & $\myclock{1}$ & \geqmark & \geqmark & \geqmark}
              {\umark{0} & \nmark{1} & \umark{0} & \umark{0} & \umark{0}}}
   \put(186,0){\drawfckpttabcldv{\geqmark & \geqmark & $\myclock{1}$ & \geqmark & \geqmark}
                      {\umark{0} & \umark{0} & \nmark{2} & \umark{0} & \nmark{0}}}
   \put(198,-8){$+$}		      
}

\put(0,85){
  \put(0,0){\large $\proc{3}$}
  \put(20,0){\vector(1,0){210}}
  \put(20,0){\drawckpttabcldv{\geqmark & \geqmark & $\myclock{1}$ & \geqmark & \geqmark}
            {\umark{0} & \umark{0} & \nmark{1} & \umark{0} & \umark{0}}}
  \put(45,0){\vector(1,4){20}}
  \put(65,67){\drawmitabcldv{2}{\geqmark & \geqmark & $\myclock{1}$ & \geqmark & \geqmark}
             {\umark{0} & \umark{0} & \nmark{1} & \umark{0} & \umark{0}}}
  \put(62,4){$+$}	
  \put(100,0){\drawckpttabcldvunder{\geqmark & \geqmark & $\myclock{2}$ & \geqmark & \geqmark}
            {\umark{0} & \umark{1} & \nmark{2} & \umark{1} & \umark{0}}}
  \put(138,4){$+$}
  \put(150,0){\vector(1,4){20}}
  \put(131,70){\drawmitabcldv{6}{\geqmark & \geqmark & $\tsincr{\myclock{2}}$ & \geqmark & \geqmark}
                        {\umark{0} & \umark{1} & \nmark{2} & \umark{1} & \nmark{2}}}
}

\put(0,45){
  \put(0,0){\large $\proc{4}$}
  \put(20,0){\vector(1,0){210}}
  \put(20,0){\drawckpttabcldv{\geqmark & \geqmark & \geqmark & $\myclock{1}$ & \geqmark}
            {\umark{0} & \umark{0} & \umark{0} & \nmark{1} & \umark{0}}}
  \put(45,0){\vector(1,-4){10}}	
  \put(55,-8){\drawmitabcldv{1}{\geqmark & \geqmark & \geqmark & $\myclock{1}$ & \geqmark}
               {\umark{0} & \umark{0} & \umark{0} & \nmark{1} & \umark{0}}}
  \put(55,0){\vector(1,4){10}}
  \put(65,28){\drawmitabcldv{3}{\geqmark & \geqmark & \geqmark & $\myclock{1}$ & \geqmark}
          {\umark{0} & \umark{0} & \umark{0} & \nmark{1} & \umark{0}}}
}

\put(0,05){
  \put(0,0){\large $\proc{5}$}
  \put(20,0){\vector(1,0){210}}
  \put(20,0){\drawckpttabcldv{\geqmark & \geqmark & \geqmark & \geqmark & $\myclock{1}$}
            {\umark{0} & \umark{0} & \umark{0} & \umark{0} & \nmark{1}}}
  \put(56,-8){$+$}
  \put(90,0){\drawckpttabcldv{\geqmark & \geqmark & \geqmark & \geqmark & $\myclock{2}$}
                      {\umark{0} & \umark{0} & \umark{0} & \umark{1} & \nmark{2}}}
  \put(120,0){\vector(1,4){20}}
  \put(140,67){\drawmitabcldv{5}{\geqmark & \geqmark & \geqmark & \geqmark & $\myclock{2}$}
                      {\umark{0} & \umark{0} & \umark{0} & \umark{1} & \nmark{2}}}
}

\end{picture}
           
  \caption{Timestamp information to implement Lazy-FI}
  \label{fig:lazy-greater2}
\end{figure}

\begin{figure}[htbp]
  \centering
  \begin{tabular}{|c|} \hline
    {\parbox{9.0cm}{
        \begin{algorithmic}
          \STATE \hspace{-1em}\textit{take$\_$checkpoint():}
          \FORALL{$k$} 
          \STATE $\sentto_i[k]$ \attr \FALSE; \ENDFOR
          \FORALL{$k \neq i$} 
          \STATE $\taken_i[k]$ \attr \TRUE; \ENDFOR
          \IF{$\increment$}
          \STATE $\lc_i$ \attr $\lc_i + 1$; 
          \FORALL{$k$}     
          \STATE $\equalincr_i[k]$ \attr \FALSE; \ENDFOR \ENDIF
          \STATE $\increment$ \attr \FALSE;
          \STATE \textit{Save the current state on stable memory;}
          \STATE $\ckptv_i[i]$ \attr $\ckptv_i[i]+1$; 
        \end{algorithmic}
        \medskip
        \begin{algorithmic}
          \STATE \hspace{-1em} \textbf{$P_i$'s initialization:}
          \FORALL{$k$} 
          \STATE $\ckptv_i[k]$ \attr 0; \ENDFOR
          \STATE $\lc_i$ \attr $0$;
          \STATE $\increment$ \attr \TRUE;
          \STATE $\taken_i[i]$ \attr \FALSE;
          \STATE \textit{take$\_$checkpoint()};
        \end{algorithmic}
        \medskip        
        \begin{algorithmic}
          \STATE \hspace{-1em}\textbf{$P_i$ sends a message to $P_k$:}
          \STATE $\sentto_i[k]$ \attr \TRUE;
          \STATE $\send(m, \lc_i,$
          \STATE \hspace{2em}  $\equalincr_i, \ckptv_i, \taken_i)$ to $P_k$;
        \end{algorithmic}
    }} \\ \hline
  \end{tabular}
  \caption{Lazy-FI protocol (adapted from~\cite{Tsai2007-lazy}) (Part 1)} \label{alg:lazy-fi1}
\end{figure}

\begin{figure}[htbp]
  \centering
  \begin{tabular}{|c|} \hline
    {\parbox{9.0cm}{
        \begin{algorithmic}
          \STATE \hspace{-1em}\textit{LAZY\_FI$\_1()$:}
          \RETURN $\exists k:\sentto_i[k] \land \lnot m.\equalincr[k] \land
                  m.t > \lc_i$;
        \end{algorithmic}
        \medskip

        \begin{algorithmic}
          \STATE \hspace{-1em}\textit{LAZY\_FI$\_2()$:}
          \RETURN $m.\ckptv[i] = \ckptv_i[i] \land m.taken[i]$;
        \end{algorithmic}

        \medskip
        \begin{algorithmic} 
          \STATE \hspace{-1em}\textbf{$P_i$ receives a message from $P_j$:}
          \IF {\textit{LAZY\_FI$\_1() \lor$ LAZY\_FI$\_2()$}}
          \STATE \textit{take$\_$checkpoint()};
          \ENDIF
          \IF {$m.t > \lc_i$}
          \STATE $\lc_i$ \attr $m.t$; 
          \STATE $\increment_i$ \attr \TRUE; $\equalincr[i]$ \attr \TRUE;
          \FORALL{$k \neq i$}  
          \STATE $\equalincr_i[k]$ \attr $m.\equalincr_i[k]$; \ENDFOR 
          \ELSIF {$m.t = \lc_i$}
          \STATE $\increment_i$ \attr \TRUE; $\equalincr[i]$ \attr \TRUE;
          \FORALL{$k$}  
          \STATE $\equalincr_i[k]$ \attr $\equalincr_i[k] \lor m.\equalincr_i[k]$; 
          \ENDFOR \ENDIF
          \FORALL{$k \neq i$}
          \IF {$m.\ckptv[k] > \ckptv_i[k]$}
          \STATE $\ckptv_i[k]$ \attr $m.\ckptv[k]$; $\taken_i[k]$ \attr $m.\taken[k]$;
          \ELSIF {$m.\ckptv[k] = \ckptv_i[k]$}
          \STATE $\taken_i[k]$ \attr $\taken_i[k] \lor m.\taken[k]$;
          \ENDIF
          \ENDFOR
          \STATE deliver(m); 
        \end{algorithmic}
    }} \\ \hline
  \end{tabular}
  \caption{Lazy-FI protocol (adapted from~\cite{Tsai2007-lazy}) (Part 2)} \label{alg:lazy-fi2}
\end{figure}

Fortunately, a small variation in the $\greater$ vector allows the
implementation of the lazy strategy. The entry $\greater_i[i]$ is set
to false at every checkpoint and set to true when a process set its
$\increment$ flag~\cite{Tsai2007-lazy}. To differentiate the vector
used in Lazy-FI to the one used in FI, we are going to introduce an
equivalent vector with an intuitive meaning: $\equalincr$. Each entry
$\equalincr_i[k]$ is true if to the knowledge of $\proc{i}$ its clock
is equal to the clock of $\proc{k}$ and $\proc{k}$ will increase its
clock before saving the next checkpoint. When $\equalincr_i[k]$ is
false, we have no additional information whether the clock of
$\proc{i}$ is greater or equal to the clock of $\proc{k}$.

Fig.~\ref{fig:lazy-greaterb} shows the  propagation of $\equalincr$
and it is very similar to Fig.~\ref{fig:lazy-greatera}. Once again,
instead  of true  and false  values, we  have used  the signs  $+$ and
$\geq$.   Due to  an extra  message from  $\proc{5}$, $\proc{3}$  will
increase  its clock  before  saving the  next checkpoint.   $\proc{2}$
receives this information and does not take a forced checkpoint before
delivering $m_7$.

Fig.~\ref{fig:lazy-greaterc} shows another similar situation that
emphasizes the need of the vectors $\ckptv$ and $\taken$. Although
upon the reception of $m_7$ $\proc{2}$ receives the information that
$\proc{3}$ will increase its clock, $\proc{2}$ will take a forced
checkpoint to break the Z-cycle $[m_7, m_4, m_6]$.

The conditions used in the Lazy-FI protocol can be stated as follows:

\begin{center}
  $\clazyfi \equiv \clazyfione \lor \clazyfitwo$

  \smallskip
  
  $\clazyfione \equiv \exists k:
       \sentto_i[k] \land \lnot m.\equalincr[k] \land$\\
  $m.t > \lc_i$

  \smallskip
  
  $\clazyfitwo \equiv \cfitwo$

\end{center}

Figs.~\ref{alg:lazy-fi1} and \ref{alg:lazy-fi2} present the code that implements the Lazy-FI
protocol~\cite{Tsai2007-lazy} using the $\equalincr$ vector.

\section{Attempts to optimize FI and Lazy-FI} \label{sec:attempts}

This section starts with a description of the FINE approach to
optimize FI. After that, it presents a property that must be followed
by all optimizations of FI and proves that both FINE and Lazy-FINE do
not guarantee the absence of useless checkpoints.

\subsection{The FINE approach}

The  basic  FINE  protocol  tries  to  reduce  the  number  of  forced
checkpoints  using  the  same  data structures  as  the  FI  protocol.
Fig.~\ref{fig:fine-proposal} illustrates the approach.  $\proc{2}$ has
sent a message  $m_1$ to $\proc{3}$ with $m_1.t =  1$. When $\proc{2}$
receives $m_3$ from  $\proc{1}$, it verifies that $m_3.t  > m_1.t$ and
$\lc_3$, up  to $\proc{2}$'s  knowledge, has not  reached 2  yet.  The
receiving of $m_3$ would have forced  a checkpoint in the FI protocol,
but since $m_3.\taken[2]$ is false, the FINE protocol does not force a
checkpoint because the  messages close no Z-cycle.   These sequence of
messages are called harmless cycles~\cite{Luo2009}.

\begin{figure}[htbp]
  \centering
{\begin{picture}(230,130)(0,-15)

\put(0,85){
  \put(0,0){\large $\proc{1}$}
  \put(20,0){\vector(1,0){210}}
  \put(20,0){\drawckpttabcldv{$\myclock{1}$ & \gmark & \gmark}
            {\nmark{1} & \umark{0} & \umark{0}}}
  \put(60,0){\drawckptijtabcldv{1}{2}{$\myclock{2}$ & \gmark & \gmark}
            {\nmark{2} & \umark{0} & \umark{0}}}
  \put(143,0){\drawtabcldv{$\myclock{2}$ & \gmark & \gmark}
             {\nmark{2} & \nmark{1} & \nmark{1}}}
  \put(178,0){\vector(1,-2){20}}
  \put(196,-10){\drawmitabcldv{3}{$\myclock{2}$ & \gmark & \gmark}
               {\nmark{2} & \nmark{1} & \nmark{1}}}
  }

\put(0,45){
  \put(0,0){\large $\proc{2}$}
  \put(20,0){\vector(1,0){210}}
  \put(20,0){\drawckpttabcldv{\gmark & $\myclock{1}$ & \gmark}
            {\umark{0} & \nmark{1} & \umark{0}}}
  \put(60,0){\vector(1,-2){20}}
  \put(78,-10){\drawmitabcldv{1}{\gmark & $\myclock{1}$ & \gmark}
      {\umark{0} & \nmark{1} & \umark{0}}}
  \put(180,2){\notfckptmark}     
  \put(205,0){\drawtabcldvunder{\emark & $\myclock{2}$ & \gmark}
             {\nmark{2} & \nmark{1} & \nmark{1}}}
  }

\put(0,5){
  \put(0,0){\large $\proc{3}$}
  \put(20,0){\vector(1,0){210}}
  \put(20,0){\drawckpttabcldv{\gmark & \gmark & $\myclock{1}$}
            {\umark{0} & \umark{0} & \nmark{1}}}
  \put(80,0){\drawtabcldvunder{\gmark & \emark & $\myclock{1}$}
            {\umark{0} & \nmark{1} & \nmark{1}}}
  \put(120,0){\vector(1,4){20}}
  \put(132,24){\drawmitabcldv{2}{\gmark & \emark & $\myclock{1}$}
              {\umark{0} & \nmark{1} & \nmark{1}}}
  \put(170,0){\drawckptijtabcldv{3}{2}{\gmark & \gmark & $\myclock{2}$}
            {\umark{0} & \umark{1} & \nmark{2}}}
  }				 

\end{picture}}
  \caption{Fine proposal}
  \label{fig:fine-proposal}
\end{figure}

The basic FINE protocol is based on the following
condition~\cite{Luo2009}:

\begin{center}
  $\cfine \equiv \cfineone \lor \cfinetwo$
\end{center}

where condition $\cfineone$ can be expressed using
a $\greater$ vector~\cite{Luo2011} and $\cfinetwo$
is equivalent to $\cfitwo$.

\begin{center}
$\cfineone \equiv \exists k: \sentto_i[k] \land
m.\greater[k] \land$ \mbox{$m.t > \lc_i \land$}
$m.\taken[k]
$  
\smallskip

$\cfinetwo \equiv \cfitwo$
\end{center}

Fig.~\ref{alg:fine} presents the code  that implements the $\cfineone$
predicate.   The  complete  basic  FINE  protocol  is  implemented  by
replacing    the   \textbf{FI\_1()}    with   \textbf{FINE\_1()}    in
Fig.~\ref{alg:orig-HMNR2}.

\begin{figure}[htbp]
  \centering
 \fbox{\parbox{8.5cm}{\textbf{FINE\_1()}
\begin{algorithmic}
  \RETURN {$\exists k:\sentto_i[k] \land m.\greater[k] \land$}
  \STATE \hspace{1em} $m.t > \lc_i \land\ m.\taken[k]$ 
\end{algorithmic}}}
\caption{Checkpoint-inducing condition for $\cfineone$~\cite{Luo2011}}  \label{alg:fine}
\end{figure}

A lazy version of the FINE protocol, called Lazy-FINE, has been
proposed in the literature~\cite{Luo2011}. Let us define the
checkpoint inducing conditions using the vector $\equalincr$:

\begin{center}
  $\clazyfine \equiv \clazyfineone \lor \clazyfinetwo$

  \smallskip
  
  $\clazyfineone \equiv \exists k: \sentto_i[k] \land \lnot m.\equalincr[k]\ \land$

  $m.t > \lc_i[i] \land m.taken[i]$

  \smallskip

  $\clazyfinetwo \equiv \cfitwo$  
\end{center}

Fig.~\ref{alg:lazyfine} presents the code that implements the
$\clazyfineone$ predicate.  The complete basic Lazy-FINE protocol is
implemented by replacing the \textbf{Lazy\_FI\_1()} with
\textbf{Lazy\_FINE\_1()} in Fig.~\ref{alg:lazy-fi2}.  

\begin{figure}[htbp]
  \centering
 \fbox{\parbox{8.5cm}{\textbf{Lazy-FINE\_1()}
\begin{algorithmic}
  \RETURN {$\exists k:\sentto_i[k] \land \lnot m.\equalincr[k] \land$}
  \STATE \hspace{1em} $m.t > \lc_i \land\ m.\taken[k]$ 
\end{algorithmic}}}
\caption{Checkpoint-inducing condition for $\clazyfineone$ (adapted
  from~\cite{Luo2011})} \label{alg:lazyfine}
\end{figure}

\subsection{FI's optimization limits} \label{sec:limitations}

The timestamps of Fig.~\ref{fig:fine-proposal} are not Z-consistent,
since $\ckpt{1}{2} \zpath \ckpt{3}{2}$ and $\ckpt{1}{2}.t =
\ckpt{3}{2}.t$.  This violation of Z-consistency may appear innocuous
at first, but it violates an important property of any FI
optimization. Suppose one considers the $\cfione \land \pred \lor
\cfitwo$ as capable of producing a more efficient protocol, and
$\cfitwo$ is kept exactly as in the original
condition. Theorem~\ref{theo_zconsistent_timestamping} proves that
$\cfione \land \pred$ must enforce a Z-consistent timestamping to be a
valid optimization.

\begin{theorem} \label{theo_zconsistent_timestamping}
  Any  optimization  of  the  FI  protocol  whose  checkpoint-inducing
  condition can  be expressed as  $\cfione \land \pred  \lor \cfitwo$,
  $\cfione \land \pred$ must enforce a Z-consistent timestamping.
\end{theorem}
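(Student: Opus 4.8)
The plan is to argue by contraposition: I will show that if the sub-condition $\cfione \land \pred$ does \emph{not} by itself enforce a Z-consistent timestamping, then the full condition $(\cfione \land \pred) \lor \cfitwo$ produces a timestamping that is not Z-consistent either, and hence is not a valid optimization. Because $\cfitwo$ is retained verbatim from FI, the entire argument hinges on showing that $\cfitwo$ is powerless in exactly the situations that $\cfione$ was introduced to handle; the two disjuncts guard disjoint classes of zigzag paths, so weakening the first cannot be compensated by the second.

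First I would isolate the canonical configuration for which $\cfione$ is the \emph{sole} guardian of Z-consistency: a two-message zigzag path $[m_2,m_1]$ of the kind discussed under ``strengthening the partly-informed strategy'' (Fig.~\ref{fig:strict}, Fig.~\ref{fig:fine-proposal}), in which the relaying process sends $m_1$ and later receives the higher-timestamped $m_2$ \emph{within the same checkpoint interval}, taking no intermediate checkpoint. At the reception of $m_2$ we have $\sentto_i[k]$, $m_2.\greater[k]$ and $m_2.t > \lc_i$, so $\cfione$ holds, and since only checkpoints increment the clock, a forced checkpoint is the only device that can keep the source timestamp strictly below the destination timestamp along the path. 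Now suppose $\pred$ is false at this reception, so $\cfione \land \pred$ does not fire.

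The heart of the proof is the claim that $\cfitwo$ is \emph{also} false in this configuration. An entry of the incoming $\taken$ vector is set only when a \emph{causal} zigzag path carrying a checkpoint reaches $\proc{i}$'s current interval; but the relay interval here contains no checkpoint and the path is non-causal, so no such witness exists along the relevant route and $m_2.\taken[i]$ is false. Hence $\cfitwo \equiv m_2.\ckptv[i] = \ckptv_i[i] \land m_2.\taken[i]$ evaluates to false as well. With both disjuncts false, no forced checkpoint is taken, $\lc_i$ is merely raised to $m_2.t$ by the receive rule, and the destination checkpoint of the path acquires a timestamp that fails to strictly exceed the source timestamp --- precisely the violation $\ckpt{1}{2}.t = \ckpt{3}{2}.t$ exhibited by Fig.~\ref{fig:fine-proposal}, where the extra conjunct $m.\taken[k]$ adopted by FINE plays the role of $\pred$. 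The timestamping is therefore not Z-consistent under the full condition, which establishes the contrapositive and the theorem.

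I expect the main obstacle to be guaranteeing that a failure of $\cfione \land \pred$ can always be \emph{localized} to a configuration in which $\cfitwo$ is identically false; otherwise the additional checkpoints that $\cfitwo$ forces in the full protocol might repair the violation. This needs two ingredients: a structural lemma that on every checkpoint-free (non-causal) two-message zigzag path --- the configurations for which $\cfione$ is responsible --- the carried flag $m.\taken[i]$ is necessarily false, so $\cfitwo$ cannot fire there for \emph{any} choice of $\pred$; and an argument that any Z-consistency violation left open by $\cfione \land \pred$ can be reduced to, or reproduced on, such a checkpoint-free configuration. Together these certify that the responsibilities of $\cfione$ (checkpoint-free relays) and of $\cfitwo$ ($[\mu,m]$ Z-cycles with a causal checkpoint component) are genuinely disjoint and jointly exhaustive, so the burden of Z-consistency on the former can never be shifted onto the latter.
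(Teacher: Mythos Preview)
Your contrapositive is aimed at the wrong target. You set out to show that if $\cfione \land \pred$ fails to enforce Z-consistency then the \emph{full} condition $(\cfione \land \pred)\lor\cfitwo$ also yields a non-Z-consistent timestamping, ``and hence is not a valid optimization.'' But non-Z-consistency does \emph{not} by itself imply the existence of a useless checkpoint; Z-consistency is sufficient for the absence of Z-cycles, not necessary. What must be shown is that the protocol admits a computation with an actual Z-cycle. The paper does exactly this: starting from an arbitrary witness computation in which $\ckpt{i}{x}\zpath\ckpt{j}{y}$ with $\ckpt{i}{x}.t\geq\ckpt{j}{y}.t$ (and no useless checkpoint yet), it \emph{extends} the computation by a single extra message $m_j$ sent from $\interv{j}{y}$ back to $\interv{i}{x}$, thereby closing a Z-cycle $[\zeta,m_j]$ around $\ckpt{j}{y}$. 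The point of the proof is then to check that this added message cannot trigger a forced checkpoint: $\cfione$ fails because $m_j.t=\alpha\leq\alpha+\delta=\lc_i$, and $\cfitwo$ fails because $m_j$ is the first event after $\ckpt{j}{y}$ and $\ckpt{i}{x}\nrightarrow\ckpt{j}{y}$ (since $\zeta$ is non-causal), so $m_j$ carries no $\taken$ witness for $\proc{i}$'s current interval. This message-adding construction is the crux, and it is absent from your outline.

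There is a second gap. You try to localize the failure to a canonical two-message relay in which you \emph{assume} $\pred$ is false. But the hypothesis only gives you \emph{some} computation in which Z-consistency breaks under the running protocol; you cannot choose the scenario and then stipulate that $\pred$ is false there, because $\pred$ is an arbitrary predicate and might well be true in your canonical picture while failing only on longer, more intricate zigzag paths. The paper sidesteps this by working with an arbitrary witness $\zeta$ of unspecified shape; your reduction to a checkpoint-free two-message relay (the ``structural lemma'' you anticipate needing) would itself require a nontrivial argument that every violation can be reproduced in that form, and there is no reason to believe this holds for every $\pred$.
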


\begin{proof}  Assume an optimization of the FI protocol with condition  
 $\cfione \land  \pred \lor \cfitwo$ where $\cfione  \land \pred$ does
  not  enforce a  Z-consistent  timestamping. Thus,  there  must be  a
  computation  with two  checkpoints  $\ckpt{i}{x}$ and  $\ckpt{j}{y}$
  such  as $\ckpt{i}{x}  \zpath \ckpt{j}{y}$  and  $\ckpt{i}{x}.t \geq
  \ckpt{j}{y}.t$.  For  simplicity,  let  $\ckpt{i}{x}.t  =  \alpha  +
  \delta$, $\ckpt{j}{y}.t  = \alpha$ with $\delta \geq  0$ and $\zeta$
  be   the  zigzag  path   between  $\ckpt{i}{x}$   and  $\ckpt{j}{y}$
  (Fig.~\ref{fig:theo_zconsistencya}). Depending  on the properties
  of  $\cfione  \land  \pred$,  this computation  can  be  arbitrarily
  complex,  involving  other  processes  and requiring  many  message
  exchanges. Let  us assume that  this computation does not  enforce a
  Z-consistent timestamping, but has no useless checkpoint.

  We add to the computation another message $m_j$ sent by $\proc{j}$
  as the \emph{first} event of the interval $\interv{j}{y}$ and
  received by $\proc{i}$ in the interval $\interv{i}{x}$ after the
  first message of $\zeta$ is sent
  (Fig.~\ref{fig:theo_zconsistencyb}).  By our construction $m_j.t
  = \alpha$.  This implies that $\ckpt{i}{x} \nrightarrow
  \ckpt{j}{y}$, and that $\cfitwo$ can never be true.  Evaluating
  $\cfione$ at the time $m_j$ is received,
  $m_j.t \leq \alpha + \delta$ and no
  checkpoint is forced upon the reception of $m_j$, no matter the
  existence of $\pred$. Since $\zeta$ must be non-causal, the z-cycle
  formed by $m_j$ and $\zeta$ is not detected by $\cfitwo$ and the protocol 
  allowed the occurrence of a useless checkpoint $\ckpt{j}{y}$. \qedhere
\end{proof}

\begin{figure}[htbp]
  \centering
  \subfloat[]{
{\begin{picture}(120,60)(0,-5)
      \put(0,40){
        \put(0,0){$\proc{i}$}
        \put(20,0){\vector(1,0){100}}
        \put(20,0){\drawckptijts{i}{x}{$\alpha+\delta$}}
        \put(60,0){
	  \put(0,0){\line(1,-2){7.5}}
          \put(2.5,-15){\line(1,0){5}}
	  \put(2.5,-15){\vector(1,-2){7.5}}
          \put(-5,-15){$\zeta$}	
	}
      }

      \put(0,10){
        \put(0,0){$\proc{j}$}
        \put(20,0){\vector(1,0){100}}
        \put(100,0){\drawckptijts{j}{y}{$\alpha$}}
      }
  \end{picture}}
  
  \subfloat[]{
{\begin{picture}(160,60)(0,-5)
      \put(0,40){
        \put(0,0){$\proc{i}$}
        \put(20,0){\vector(1,0){140}}
        \put(20,0){\drawckptijts{i}{x}{$\alpha+\delta$}}
        \put(60,0){
	  \put(0,0){\line(1,-2){7.5}}
          \put(2.5,-15){\line(1,0){5}}
	  \put(2.5,-15){\vector(1,-2){7.5}}
          \put(-5,-15){$\zeta$}	
	}
      }

      \put(0,10){
        \put(0,0){$\proc{j}$}
        \put(20,0){\vector(1,0){140}}
        \put(90,0){\drawckptijts{j}{y}{$\alpha$}}
        \put(110,0){\vector(1,2){15}}
        \put(120,12){\drawmits{j}{$\alpha$}}	
      }
  \end{picture}}
  \caption{Necessity of a Z-consistent timestamping}
  \label{fig:theo_zconsistency}
\end{figure}

\subsubsection*{FINE may lead to useless checkpoints}

FINE is an optimization of  FI where the checkpoint inducing condition
can  be  expressed  as  $\cfione  \land  \pred  \lor  \cfitwo$,  where
$\pred \equiv  m.\taken[k]$, evaluated  by process $\proc{i}$  when it
receives a message $m$. Being an  optimization, we can assume there is
at least one  situation where $\cfione$ is true,  but $m.\taken[k]$ is
false.  In  this case, because  $\cfione$ is true, we  know $\proc{i}$
has sent  a message  $m'$ to  $\proc{k}$ in  the current  interval and
$m'.t  <  m.t$.   We  also known  that,  considering  all  information
available to $\proc{i}$, including what it learns trough the reception
of $m$, it can't be sure $lc_k  \geq m.t$ after the reception of $m'$.
Process $\proc{i}$ can only  conservatively assume that $\proc{k}$ has
updated its clock to $m'.t$.  Thus, FI forces a checkpoint to ensure a
Z-consistent timestamping.

Even as $\cfione$ is true, $\pred$ is  false and FINE does not force a
checkpoint.  However,  $m.\taken[k]$ only informs $\proc{i}$  that the
causal  paths  from  the  last checkpoint  taken  by  $\proc{k}$  that
causally precedes the reception of $m$ to the current interval contain
no  checkpoints  from  other   processes.   There  is  no  information
available  in this  data  structure to  exclude  the possibility  that
$\proc{k}$ has indeed updated $lc_k$ to exactly $m'.t$.  If this turns
out  to  be  the  case, the  checkpoint  $\ckpt{k}{x}$  following  the
reception of $m'$ by $\proc{k}$ can  be taken before $lc_k$ is further
incremented,  and as  a consequence  $\ckpt{k}{x}.lc \leq  m.t$.  This
opens the possibility of violating Z-consistent timestamping.  This is
exactly   what   happens   when   message   $m_3$   is   received   in
Fig.~\ref{fig:fine-proposal},  that   illustrates  a   situation  FINE
doesn't force a checkpoint FI would force.

According to Theorem~\ref{theo_zconsistent_timestamping}, because 
$\cfineone$ doesn't  produce a  Z-consistent timestamping it  does not
guarantee the absence of useless checkpoints. 
Indeed, Fig.~\ref{fig:counterexample} is a counterexample: it shows a
possible continuation of the scenario presented in
Fig.~\ref{fig:fine-proposal} that leads to the occurrence of a useless
checkpoint.  When $\proc{2}$ receives $m_3$ from $\proc{1}$, there is
no Z-cycle \emph{known} to $\proc{2}$ closed by the receipt of
$m_3$. However, this does not exclude the formation of a Z-cycle,
undetected at the time $m_3$ is received.

\begin{figure}[htbp]
  \centering 
{\begin{picture}(230,130)(0,-15)

\put(0,85){
  \put(0,0){\large $\proc{1}$}
  \put(20,0){\vector(1,0){210}}
  \put(20,0){\drawckpttabcldv{$\myclock{1}$ & \gmark & \gmark}
            {\nmark{1} & \umark{0} & \umark{0}}}
  \put(60,0){\drawckptijtabcldv{1}{2}{$\myclock{2}$ & \gmark & \gmark}
            {\nmark{2} & \umark{0} & \umark{0}}}
  \put(102,0){\drawtabcldv{$\myclock{2}$ & \gmark & \gmark}
             {\nmark{2} & \nmark{1} & \nmark{1}}}
  \put(138,0){\vector(1,-2){20}}
  \put(156,-10){\drawmitabcldv{3}{$\myclock{2}$ & \gmark & \gmark}
               {\nmark{2} & \nmark{1} & \nmark{1}}} 
  \put(200,0){\drawtabcldv{$\myclock{2}$ & \gmark & \emark}
             {\nmark{2} & \umark{1} & \nmark{2}}}
}

\put(0,45){
  \put(0,0){\large $\proc{2}$}
  \put(20,0){\vector(1,0){210}}
  \put(20,0){\drawckpttabcldv{\gmark & $\myclock{1}$ & \gmark}
             {\umark{0} & \nmark{1} & \umark{0}}}
  \put(40,0){\vector(1,-2){20}}
  \put(56,-10){\drawmitabcldv{1}{\gmark & $\myclock{1}$ & \gmark} 
    {\umark{0} & \nmark{1} & \umark{0}}}
  \put(155,0){\drawtabcldvunder{\emark & $\myclock{2}$ & \gmark}
             {\nmark{2} & \nmark{1} & \nmark{1}}}
}

\put(0,5){
  \put(0,0){\large $\proc{3}$}
  \put(20,0){\vector(1,0){210}}
  \put(20,0){\drawckpttabcldv{\gmark & \gmark & $\myclock{1}$}
            {\umark{0} & \umark{0} & \nmark{1}}}
  \put(60,0){\drawtabcldvunder{\gmark & \emark & $\myclock{1}$}
            {\umark{0} & \nmark{1} & \nmark{1}}}
  \put(82,0){\vector(1,4){20}}
  \put(104,70){\drawmitabcldv{2}{\gmark & \emark & $\myclock{1}$}
    {\umark{0} & \nmark{1} & \nmark{1}}}
  \put(110,0){\drawckptijtabcldv{3}{2}{\gmark & \gmark & $\myclock{2}$}
       {\umark{0} & \umark{1} & \nmark{2}}}
  \put(180,0){\vector(1,4){20}}
  \put(190,30){\drawmitabcldv{4}{\gmark & \gmark & $\myclock{2}$}
    {\umark{0} & \umark{1} & \nmark{2}}}
}				 

\end{picture}} 
  \caption{A useless checkpoint under FINE}
  \label{fig:counterexample}
\end{figure}

\subsubsection*{Lazy-FINE may lead to useless checkpoints} \label{sec:lazy-fine}

Lazy-FINE is an optimization of  Lazy-FI where the checkpoint inducing
condition        can        also         be        expressed        as
$\clazyfione     \land     \pred      \lor     \clazyfitwo$,     where
$\pred \equiv  m.\taken[k]$, evaluated  by process $\proc{i}$  when it
receives a  message $m$. By the  same argument we have  made for FINE,
there is at  least one situation where $\clazyfione$  is true, $\pred$
is false  and Lazy-FINE  does not  force a  checkpoint.  This  means a
message $m'$ was  sent to $\proc{k}$ by $\proc{i}$ with  $m'.t < m.t$,
and that checkpoint  $\ckpt{k}{x}$ following the reception  of $m'$ by
$\proc{k}$ can  be taken before  $lc_k$ is further incremented.   As a
consequence  $\ckpt{k}{x}.lc \leq  m.t$, and  we have  once again  the
possibility violating Z-consistent timestamping.

Theorem~\ref{theo_zconsistent_timestamping}  also   informs  us  that,
because the  $\clazyfineone$ predicate doesn't produce  a Z-consistent
timestamping,   it  does   not  guarantee   the  absence   of  useless
checkpoints.   In  Fig.~\ref{fig:lazycounterexample}, when  $\proc{3}$
receives   $m_4$  from   $\proc{2}$,  $m_4.t   >  \lc_3$,   but  since
$\lnot  m.taken[3]$, $[m_4,  m_2,  m_3]$ would  form  just a  harmless
Z-cycle.  However,  $\proc{3}$  receives $m_5$  from  $\proc{4}$  with
$m_5.t  =  \lc_3$  no  forced   checkpoint  is  taken  and  a  Z-cycle
$m_5, m_4, m_2$ is formed.

\begin{figure}[htbp]
  \centering 

\begin{picture}(230,170)(0,-15)

  \put(0,125){
    \put(0,0){\large $\proc{1}$}
    \put(20,0){\vector(1,0){210}}
    \put(20,0){\drawckpttabcldv
      {$\myclock{1}$ & \geqmark & \geqmark & \geqmark} 
      {\nmark{1} & \umark{0} & \umark{0} & \umark{0}}}        
    \put(40,0){\vector(1,-4){10}}
    \put(50,-10){\drawmitabcldv{1}
      {$\myclock{1}$ & \geqmark & \geqmark & \geqmark}
      {\nmark{1} & \umark{0} & \umark{0} & \umark{0}}}
  }
  
  \put(0,85){
    \put(0,0){\large $\proc{2}$}
    \put(20,0){\vector(1,0){210}}
    \put(20,0){\drawckpttabcldv
      {\geqmark & $\myclock{1}$ & \geqmark & \geqmark} 
      {\umark{0} & \nmark{1} & \umark{0} & \umark{0}}}
    \put(46,-8){$+$}      
    \put(80,0){\drawckpttabcldv
      {\geqmark & $\myclock{2}$ & \geqmark & \geqmark} 
      {\umark{1} & \nmark{2} & \umark{0} & \umark{0}}}
    \put(130,0){\vector(1,-4){10}}
    \put(140,-10){\drawmitabcldv{4}
      {\geqmark & $\myclock{2}$ & \geqmark & \geqmark} 
      {\umark{1} & \nmark{2} & \nmark{1} & \nmark{1}}}
    \put(186,4){$+$}      
  }

  \put(0,45){
    \put(0,0){\large $\proc{3}$}
    \put(20,0){\vector(1,0){210}}
    \put(20,0){\drawckpttabcldv
      {\geqmark & \geqmark & $\myclock{1}$ & \geqmark} 
      {\umark{0} & \umark{0} & \nmark{1} & \umark{0}}}
    \put(40,0){\vector(1,-4){10}}
    \put(50,-10){\drawmitabcldv{2}
      {\geqmark & \geqmark & $\myclock{1}$ & \geqmark} 
      {\umark{0} & \umark{0} & \nmark{1} & \umark{0}}}
    \put(120,2){\notfckptmark}
    \put(136,-8){$+$}	
  }

  \put(0,5){
    \put(0,0){\large $\proc{4}$}
    \put(20,0){\vector(1,0){210}}
    \put(20,0){\drawckpttabcldv
      {\geqmark & \geqmark & \geqmark & $\myclock{1}$} 
      {\umark{0} & \umark{0} & \umark{0} & \nmark{1}}}
    \put(46,-8){$+$}
     \put(89,0){\vector(1,4){20}}
     \put(101,30){\drawmitabcldv{3}
      {\geqmark & \geqmark & \geqmark & $\tsincr{\myclock{1}}$}
      {\umark{0} & \umark{0} & \nmark{1} & \nmark{1}}}
    \put(140,0){\drawckptijtabcldv{3}{2}
      {\geqmark & \geqmark & \geqmark & $\myclock{2}$} 
      {\umark{0} & \umark{0} & \umark{1} & \nmark{2}}}
    \put(170,0){\vector(1,4){20}}
    \put(182,30){\drawmitabcldv{5}
      {\geqmark & \geqmark & \geqmark & $\myclock{2}$} 
      {\umark{0} & \umark{0} & \umark{1} & \nmark{2}}}
  }				 

\end{picture}
  \caption{A useless checkpoint under Lazy-FINE}
  \label{fig:lazycounterexample}
\end{figure}

\section{Conclusion} \label{sec:conclusion}

This paper reviewed index-based checkpointing protocols and proved
that the FINE and Lazy-FINE protocols do not guarantee the absence of
useless checkpoints.  This paper also reinforces that all
optimizations of FI must enforce a Z-consistent timestamping. As a
consequence, FI and Lazy-FI can be rolled back to the position of most
efficient index-based protocols; whether or not they can be further
optimized remains an open problem.

\newpage

\bibliographystyle{elsarticle-num}
\bibliography{rollback}   

\end{document}